\algrenewcommand\algorithmicrequire{\textbf{Input:}}
\algrenewcommand\algorithmicensure{\textbf{Output:}}
\newtheorem{remark}{{Remark}}
\newtheorem{lemma}{Lemma}
\newtheorem{definition}{Definition}
\newtheorem{theorem}{Theorem}
\newtheorem{assumption}{Assumption}
\newtheorem{corollary}{Corollary}
\newtheorem{example}{{Example}}
\newtheorem{problem}{Problem}
\newtheorem{condition}{Condition}
\def\begmat#1{\begin{bmatrix}#1\end{bmatrix}}
\def\cale{{\cal E}}
\def\cale{{\cal E}}
\def\cala{{\cal A}}
\def\dimx{n}
\def\dimu{m}
\def\dimy{p}
\def\colM{\mu}
\def\eps{\beta}
\def\vec{\mbox{vec}}
\def\L2e{{\cal L}_{2e}}
\def\rea{\mathbb{R}}
\def\diag{\mbox{diag}}
\def\col{\mbox{col}}
\def\diag{\mbox{diag}}
\def\rank{\mbox{rank}\;}
\def\im{\mbox{Im}\;}
\def\ker{\mbox{Ker}\;}
\def\cala{{\cal A}}
\def\cale{{\cal E}}
\def\col{\mbox{col}}
\def\L2{{\cal L}_2}
\def\L2e{{\cal L}_{2e}}
\def\rea{\mathbb{R}}
\def\diag{\mbox{diag}}
\def\begequarr{\begin{eqnarray}}
\def\endequarr{\end{eqnarray}}
\def\begequarrs{\begin{eqnarray*}}
\def\endequarrs{\end{eqnarray*}}
\def\begarr{\begin{array}}
\def\endarr{\end{array}}
\def\begequ{\begin{equation}}
\def\endequ{\end{equation}}
\def\begdes{\begin{description}}
\def\enddes{\end{description}}
\def\begenu{\begin{enumerate}}
\def\begite{\begin{itemize}}
\def\endite{\end{itemize}}
\def\endenu{\end{enumerate}}
\def\lef[{\left[\begin{array}}
\def\rig]{\end{array}\right]}
\def\begcen{\begin{center}}
\def\endcen{\end{center}}
\def\begmat#1{\begin{bmatrix}#1\end{bmatrix}}
\definecolor{myblue}{RGB}{0,114,178}
\definecolor{myorange}{RGB}{230,159,0}
\definecolor{mygreen}{RGB}{0,158,115}
\definecolor{codegreen}{RGB}{0,100,0}
\algrenewcommand\algorithmiccomment[1]{\hfill\textcolor{codegreen}{\ttfamily\% #1}}
\title{\LARGE \bf
Input–Output Data-Driven Stabilization of Continuous-Time Linear MIMO Systems
}
\author{Haihui Gao, Alessandro Bosso, Lei Wang, David Saussi\'e, and Bowen Yi
\thanks{This work was supported in part by the Natural Sciences and Engineering Research Council of Canada (NSERC) under Grants RGPIN-2020-06608 and RGPIN-2024-0478, the Programme PIED, the European Union's Horizon Europe research and innovation program under the Marie Sk{\l}odowska-Curie Grant Agreement No. 101104404 - \mbox{IMPACT4Mech}, and the National Natural Science Foundation of China under Grant 62573383. 
}
\thanks{H. Gao, D. Saussi\'e, and B. Yi are with the Department of Electrical Engineering, Polytechnique Montr\'eal and GERAD, Montr\'eal, Canada.
{\tt \{haihui.gao,d.saussie,bowen.yi\}@polymtl.ca}}
\thanks{A. Bosso is with Dept. of Electrical, Electronic, and Information Engineering, University of Bologna, Italy. {\tt alessandro.bosso@unibo.it}}
\thanks{L. Wang is with the College of Control Science and Engineering, Zhejiang University, Hangzhou, China. {\tt lei.wangzju@zju.edu.cn}}
}
\begin{document}

\maketitle
\thispagestyle{empty}
\pagestyle{empty}

\begin{abstract}

In this paper, we address the problem of data-driven stabilization of continuous-time multi-input multi-output (MIMO) linear time-invariant systems using the input–output data collected from an experiment.
Building on recent results for data-driven output-feedback control based on non-minimal realizations, we propose an approach that can be applied to a broad class of continuous-time MIMO systems without requiring a uniform observability index.
The key idea is to show that Kreisselmeier’s adaptive filter can be interpreted as an observer of a stabilizable non-minimal realization of the plant. 
Then, by postprocessing the input–output data with such a filter, we derive a linear matrix inequality that yields the feedback gain of a dynamic output-feedback stabilizer.
\end{abstract}

\section{Introduction}

Obtaining controllers of dynamical systems directly from trajectory data has a long history, ranging from adaptive control \cite{SASBOD} to modern data-driven approaches \cite{DataBasedCtrlBook}.
In particular, a major recent trend in data-driven control has been to design controllers directly from experimental datasets via linear matrix inequalities (LMIs).
This end-to-end paradigm avoids any intermediate modeling step and may be effective even when the data are not informative enough to perform system identification.

In the \emph{discrete-time} context, pioneering works on data-driven LMIs are \cite{DEPTES} and \cite{VANetal}, which mainly focus on the state-feedback stabilization of linear time-invariant (LTI) systems.
While the former derives LMIs by developing state-space results inspired by Willems et al.'s fundamental lemma \cite{WILetal}, the latter introduces the framework of data informativity to provide necessary and sufficient conditions for data-driven analysis and control.
Later, alternative LMIs particularly suited for noisy datasets have been derived from quadratic matrix inequalities \cite{van2023quadratic}.
The data-driven LMI paradigm has also been applied to time-varying systems \cite{10124991}, nonlinear systems \cite{DEPTES, van2023quadratic}, and output-feedback control, using either non-minimal realizations involving shifted input--output data \cite{DEPTES, ALSetal, LI2026112545} or the behavioral framework \cite{10122597}.

In contrast, direct data-driven control for \emph{continuous-time} systems remains less developed.
Recent progress includes continuous-time versions of the fundamental lemma \cite{LOPetal, SCHetal} 
and the study of the effect of sampling on informativity \cite{eising2024sampling}.
Unlike the discrete-time case, continuous-time methods must handle derivative signals, which are usually unavailable in practice.
Most state-feedback results derive LMIs assuming that state derivatives are measurable \cite{DEPTES, eising2024sampling, hu2025enforcing}.
To remove this need, \cite{rapisarda2023orthogonal} proposes orthogonal polynomial bases, while \cite{ohta2024sampling} explores more general sampling functionals.
On the other hand, data-driven control methods that rely purely on input--output data remain rare.
Notable exceptions include \cite{BOSetal24, BOSetal25}, which design output-feedback controllers for LTI systems using non-minimal realizations that can be reconstructed by linear filters of the input--output signals.
While these results address single-input single-output (SISO) and multi-input multi-output (MIMO) systems with a uniform observability index, respectively, finding an approach for general continuous-time MIMO LTI systems remains an open problem.
This paper aims to close this gap with the following contributions:

\begin{itemize}
\item[\bf C1] We propose a method to derive an output-feedback stabilizer directly from an input–output trajectory of MIMO LTI systems without restrictions on the observability indices.
We obtain this result by showing that Kreisselmeier's adaptive filter \cite{KRE79} acts as an observer of a canonical non-minimal realization of the plant \cite{BOSetal25}.
The resulting output-feedback controller combines the filter with a stabilizing feedback law depending on the filter states.

\item[\bf C2] To compute the gains of the feedback law, we postprocess the input--output data with the filter and construct an LMI with samples of the input, output, and filter trajectories.
Since the necessary rank conditions ensuring feasibility of the LMIs in \cite{BOSetal24, BOSetal25} do not hold in general due to the filter structure, we formulate the LMI applying a suitable decomposition of the data that exploits the stabilizability of the non-minimal realization, along with sufficient excitation of the controllable states.
\end{itemize}

\emph{Notation.}
We use $\rea$ and $\rea_-$ to denote the sets of real numbers and negative numbers, respectively.
$0_n$ and $0_{n\times m}$ denote the zero column vector of dimension $n$ and the zero matrix of dimension $n\times m$, respectively.
We use $|\cdot|$ to represent the Euclidean norm of a vector and $\otimes$ for the Kronecker product.
For a matrix $A\in \rea^{n\times m}$, the operator $\vec(A)$ denotes the 
$mn$-dimensional vector obtained by stacking the columns of $A$ one below another.

\section{Problem Statement}

Consider a continuous-time LTI MIMO system of the form
\begin{equation}
\label{eq:dot_x}
\begin{aligned}
    \dot{x} =  A x + Bu, \quad 
    y = C x
\end{aligned}
\end{equation}
where $x\in \rea^\dimx$ is the state, $u\in \rea^\dimu$ is the control input, and $y\in \rea^\dimy$ is the measured output, while $A$, $B$, and $C$ are constant unknown matrices of appropriate dimensions that satisfy the following assumption.

\begin{assumption}
\label{hyp:minimal}\it
    The pair $(A, B)$ is controllable and the pair $(C, A)$ is observable.
\end{assumption}

We aim at solving the following input--output data-driven stabilization problem.

\begin{problem}
\label{problem:1}\it
    Suppose that the input--output dataset
    \begin{equation}\label{eq:dataset}
        \cale_{\tt D}:=\{u(t),y(t)\}_{[0,t_{\tt D}]},
    \end{equation}
    has been acquired from the plant \eqref{eq:dot_x} during an experiment of duration $t_{\tt D} > 0$, under some excitation condition of the input $u$ and from an unknown initial condition $x(0) \in \rea^\dimx$.
    Moreover, we suppose that $u(t)$ and $x(t)$ are piecewise continuous and absolutely continuous over $[0, t_{\tt D}]$, respectively.

    Without any prior information on $A$, $B$, and $C$ except for the state dimension $\dimx$, use \eqref{eq:dataset} to design a linear dynamic output-feedback controller of the form
    \begin{equation}\label{eq:ctrl}
        \dot{x}_{\textup{c}} = A_{\textup{c}} x_{\textup{c}} + B_{\textup{c}} y, \quad u = C_{\textup{c}} x_{\textup{c}} + D_{\textup{c}} y,
    \end{equation}
    having state $x_{\textup{c}}$ and matrices $A_{\textup{c}}, B_{\textup{c}}, C_{\textup{c}}$ and $D_{\textup{c}}$, such that the origin $(x,x_{\textup{c}}) = 0$ of the interconnection of \eqref{eq:dot_x} and \eqref{eq:ctrl} is globally exponentially stable (GES).
\end{problem}

\section{Proposed Approach}
\label{sec:3}

\subsection{Kreisselmeier's Adaptive Filter}

Our approach is inspired by and built on Kreisselmeier's adaptive filter \cite{KRE79}, which is given by the following dynamical system:
\begin{equation}
\label{eq:dot_M}
    \dot{M} = FM + \begmat{y^\top \otimes I_{\dimx} &  u^\top \otimes I_{\dimx}},
\end{equation}
where $M \in \rea^{\dimx \times \colM}$ is the filter state and $F \in \rea^{\dimx \times \dimx}$ is a Hurwitz matrix, with $\mu \coloneqq (p+m)n$.
In \cite{KRE79}, it is shown that the filtered signal $M(t)$ satisfies an algebraic relation with the solution $x(t)$ of system \eqref{eq:dot_x}, for any initial condition of \eqref{eq:dot_x} and \eqref{eq:dot_M}.

Below, we provide a version of this result tailored to our case.

\begin{lemma}
\label{lem:Kreisselmeier}\it
Consider systems \eqref{eq:dot_x}, \eqref{eq:dot_M} and let Assumption \ref{hyp:minimal} hold.
Also, let $F$ in \eqref{eq:dot_M} be a Hurwitz matrix with distinct eigenvalues.
Then, there exist a non-singular matrix $T \in \rea^{\dimx \times \dimx}$ and a column vector $\theta \in \mathbb{R}^{\mu}$ such that, for all initial conditions $x(0) = x_0 \in \rea^n$ and $M(0) = M_0 \in \rea^{\dimx \times \colM}$,
\begin{equation}
\label{eq:Tx_M}
   T x(t) = M(t) \theta +   e^{Ft}(Tx_0 - M_0 \theta).
\end{equation}
\end{lemma}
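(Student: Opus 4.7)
My plan is to construct $T$ and $\theta$ so that the error signal $z(t):=Tx(t)-M(t)\theta$ obeys the autonomous dynamics $\dot z = Fz$. Since $F$ is Hurwitz, this would immediately give $z(t) = e^{Ft}z(0) = e^{Ft}(Tx_0 - M_0\theta)$, which is precisely \eqref{eq:Tx_M} after rearrangement.

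I would first exploit the Kronecker structure to reparameterize $\theta$. Writing $\theta = \col(\vec(\Theta_y),\vec(\Theta_u))$ with $\Theta_y \in \rea^{n\times p}$, $\Theta_u \in \rea^{n\times m}$, and applying the identity $(v^\top\otimes I_n)\vec(X) = Xv$, I obtain
\[
\begmat{y^\top \otimes I_n & u^\top \otimes I_n}\theta = \Theta_y y + \Theta_u u.
\]
Differentiating $z$ along \eqref{eq:dot_x} and \eqref{eq:dot_M}, substituting $y = Cx$, and grouping the terms according to their dependence on $x$ and $u$ yields
\[
\dot z - Fz = (TA - FT - \Theta_y C)x + (TB - \Theta_u)u,
\]
so $\dot z = Fz$ holds along every trajectory if and only if
\[
TA - FT = \Theta_y C, \qquad \Theta_u = TB.
\]

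The second equation merely pins down $\Theta_u$ once $T$ is known, so the real task is to exhibit a non-singular $T$ solving the Sylvester-type equation $TA - FT = \Theta_y C$ for some $\Theta_y$. Here I would invoke the classical observer-existence result: provided $\sigma(F)\cap \sigma(A) = \emptyset$, the Sylvester equation has a unique solution $T$, and this $T$ is non-singular if and only if $(C,A)$ is observable and $(F,\Theta_y)$ is controllable. Since $F$ has $n$ distinct eigenvalues it is cyclic, so a $\Theta_y$ rendering $(F,\Theta_y)$ controllable exists (a single suitably chosen column already suffices); moreover, because the $n$ distinct Hurwitz eigenvalues of $F$ are a design choice while $\sigma(A)$ is fixed, they can be assigned so that $\sigma(F)\cap\sigma(A) = \emptyset$. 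Setting $\Theta_u := TB$ and repacking $(\Theta_y,\Theta_u)$ into $\theta$ by vectorization completes the construction. The main obstacle I anticipate is precisely this non-singularity step: it is the only point where the observability part of Assumption~\ref{hyp:minimal} is consumed, and it is where the spectral and excitation hypotheses on $F$ and $\Theta_y$ interact non-trivially to distinguish the claim from the mere solvability of a linear matrix equation.
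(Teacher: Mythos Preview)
Your strategy---reduce to $\dot z = Fz$ by enforcing $TA - FT = \Theta_y C$ and $\Theta_u = TB$---is exactly what the paper does, and your derivation of the error dynamics is correct. The only point of divergence is how you produce the non-singular $T$, and there your argument has a gap.

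You invoke the Luenberger--Sylvester route: choose $\Theta_y$ so that $(F,\Theta_y)$ is controllable, solve $TA - FT = \Theta_y C$ uniquely for $T$, and conclude $T$ is invertible from observability of $(C,A)$. But that classical result requires $\sigma(F)\cap\sigma(A)=\emptyset$, and the lemma's hypotheses give you only that $F$ is Hurwitz with distinct eigenvalues; nothing prevents $A$ from having stable eigenvalues that coincide with some of those of $F$. Your remark that the eigenvalues of $F$ ``are a design choice'' and ``can be assigned'' away from $\sigma(A)$ appeals to the broader context of the paper, but it is not a proof of the lemma as stated, since $F$ enters as a given matrix in the hypothesis.

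The paper sidesteps this by reversing the construction: observability of $(C,A)$ permits pole placement, so there exists $L$ with $\sigma(A-LC)=\sigma(F)$; since $F$ has distinct eigenvalues, $A-LC$ and $F$ are automatically similar, and the similarity transformation $T$ (non-singular by construction) satisfies $FT = T(A-LC)$, i.e., $TA - FT = (TL)C$. Setting $\Theta_y := TL$ and $\Theta_u := TB$ then gives exactly your equations, with no spectral-disjointness condition needed. Replacing your Sylvester step with this pole-placement/similarity argument closes the gap without altering the rest of your proof.
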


\begin{proof}
Since the pair $(C, A)$ is observable due to Assumption \ref{hyp:minimal}, there exists a matrix $L$ such that the spectra of $A-LC$ and $F$ coincide.
Then, the fact that $F$ has distinct eigenvalues implies that $A - LC$ and $F$ are similar, i.e., there exists a non-singular matrix $T$ such that $F = T (A-LC) T^{-1}$. 
Using the change of coordinates $\bar x = T x$, system \eqref{eq:dot_x} with initial condition $x_0$ becomes
\begin{equation}\label{eq:syst_bar}
\begin{aligned}
    \dot{\bar x} = \bar A\bar x + \bar Bu, \qquad 
    y = \bar C \bar x
\end{aligned}
\end{equation}
with $\bar A= TAT^{-1}, \bar B= TB$, $\bar C=  CT^{-1}$, and initial condition $\bar x_0 = T x_0$.
System \eqref{eq:syst_bar} can be equivalently written as 
\begin{equation}
\label{eq:bar_x}
    \dot{\bar x} = F \bar x + TL y +  TB u.
\end{equation}

The main results in \cite{KRE79} follow immediately.
The solution $\bar{x}(t) = Tx(t)$ of \eqref{eq:bar_x} reads as
\begin{equation*}
\begin{aligned}
    Tx(t) & = e^{Ft} \bar x_0 + \int_0^t \! e^{F(t-s)} [TLy(s) + TBu(s)] \text{d}s
    \\
    & = e^{Ft} \bar x_0 + \int_0^t \! e^{F(t-s)} \!
    \begmat{y^\top(s) \otimes I_{\dimx} & u^\top(s) \otimes I_{\dimx}}\!
    \text{d}s \theta
    \\
    & = e^{Ft} T x_0 + \left( M(t) - e^{Ft}M_0\right)\theta,
\end{aligned}
\end{equation*}
where in the second line we exploited the identities $TLy = (y^\top \otimes I_{\dimx})\vec(TL)$, $TBu = (u^\top \otimes I_{\dimx})\vec(TB)$ and defined:
$$
\theta \coloneqq \begmat{\vec(TL) \\ \vec(TB)},
$$
while in the third line we used the solution $M(t)$ of \eqref{eq:dot_M}.
Rearranging the terms, we obtain~\eqref{eq:Tx_M}.
\end{proof}

\begin{remark}\it
While vectorizing the matrix variable \( M \) reveals that the filter dynamics of \( \mathrm{vec}(M) \) resemble those in \cite[Eq.~(85)]{BOSetal25}, the filter \eqref{eq:dot_M} does not exploit any structural property of MIMO systems.
As a consequence, unlike the design of \cite{BOSetal25}, our approach does not require knowledge of the observability indices of the pair $(C, A)$.
The trade-off, however, is the introduction of excessive coordinates in the filter.
This is not a major concern for offline data processing.
\end{remark}

\begin{remark}\it

The filter \eqref{eq:dot_M} does not rely on specific values of $A,B$, and $C$, but only on their dimensions for the given plant \eqref{eq:dot_x}.
Even in such a case, we have an algebraic relation \eqref{eq:Tx_M} between the unknown state $x$ and the available filtered variable $M$, where the vector $\theta$ is unknown.
Indeed, the use of similar filters to process input--output signals has a longstanding history in adaptive control \cite{ORTetal19}.
\end{remark}

\subsection{Filter-Based Output Feedback Stabilization}
\label{sec:31}

Our approach is motivated by the indirect output-feedback approach \cite{ANDPRA}.
In the indirect approach, we first design an observer that achieves an exponentially convergent estimate to the system state, uniformly in the ``exogenous'' signals $(u,y)$.
Then, we find a ``state-feedback'' controller for the observer.
Roughly speaking, we aim at the following property:
\begin{equation*}
\lim_{t\to\infty} M(t) = 0 \quad \mbox{(exp.)},
\end{equation*}
which then implies from \eqref{eq:Tx_M} that $x(t) \to 0$ exponentially as $t\to\infty$. 

From Lemma \ref{lem:Kreisselmeier}, we observe that the set 
\begin{equation}
\label{eq:cala}
  \mathcal{A} \coloneqq \{(x, M) \in \rea^{n}\times\rea^{n \times \colM}:  T x = M \theta \}
\end{equation}
is invariant and attractive. As a consequence, it makes sense to design a feedback law that depends only on $M$, as in classical controller design based on Luenberger observers.
However, it is important to remember that the system does not start in general on $\mathcal{A}$ because $M(0)$ would require unavailable prior information.
Thus, we need to account for the perturbation of the \emph{transverse coordinate}
\begin{equation}
\label{eq:beta}
\eps:=  Tx - M\theta \in \rea^n.
\end{equation}
To facilitate the data-driven controller design, we provide the following corollary of Lemma \ref{lem:Kreisselmeier}, which highlights the dynamics of the transverse coordinate $\eps$.
\begin{corollary}
\it \label{lem:transverse}
The transverse coordinate $\eps \in \rea^n$, defined in \eqref{eq:beta}, satisfies the dynamics  
\begin{equation}
\label{dot:beta}
    \begin{aligned}
\dot \eps = F \eps.
    \end{aligned}
\end{equation}
whose origin $\beta = 0$ is globally exponentially stable.
\end{corollary}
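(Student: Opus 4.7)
The plan is to derive the dynamics of $\beta$ directly from the identity established in Lemma~\ref{lem:Kreisselmeier}, and then conclude exponential stability from the fact that $F$ is Hurwitz.

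First I would simply differentiate the definition $\beta = Tx - M\theta$ in time, using the plant dynamics \eqref{eq:dot_x} for $\dot{x}$ and the filter dynamics \eqref{eq:dot_M} for $\dot{M}$. Next, I would substitute the explicit form of $\theta = [\vec(TL)^\top,\ \vec(TB)^\top]^\top$ used inside the proof of Lemma~\ref{lem:Kreisselmeier}, together with the Kronecker identities $(y^\top\otimes I_n)\vec(TL) = TLy$ and $(u^\top\otimes I_n)\vec(TB) = TBu$. This makes the $TBu$ contributions cancel against each other. The remaining terms are $TAx - TLy - FM\theta = T(A-LC)x - FM\theta$, and invoking the similarity relation $F = T(A-LC)T^{-1}$ established in the proof of Lemma~\ref{lem:Kreisselmeier} yields $T(A-LC)x = FTx$, so that the right-hand side collapses to $F(Tx - M\theta) = F\beta$, giving \eqref{dot:beta}.

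As a shortcut, one could instead just take the time derivative of the identity \eqref{eq:Tx_M}, which expresses $Tx - M\theta$ explicitly as $e^{Ft}(Tx_0 - M_0\theta)$; this solution is manifestly of the form $e^{Ft}\beta(0)$ and therefore satisfies $\dot{\beta} = F\beta$ with $\beta(0) = Tx_0 - M_0\theta$. Once \eqref{dot:beta} is in place, the Hurwitz property of $F$ (assumed in the statement of Lemma~\ref{lem:Kreisselmeier} and inherited here) immediately yields global exponential stability of the origin $\beta = 0$ via the standard linear-systems argument (e.g., existence of $P = P^\top \succ 0$ with $F^\top P + PF \prec 0$).

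There is essentially no obstacle: the corollary is a direct computational consequence of the algebraic relation established in Lemma~\ref{lem:Kreisselmeier}. The only care needed is bookkeeping between the Kronecker-product representation of $\theta$ and the similarity $F = T(A-LC)T^{-1}$; both of these are already available verbatim from the proof of Lemma~\ref{lem:Kreisselmeier}, so the corollary reduces to one line of differentiation plus one substitution.
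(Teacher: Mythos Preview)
Your proposal is correct, and the ``shortcut'' you describe---rearranging \eqref{eq:Tx_M} to read $\beta(t) = e^{Ft}\beta(0)$ and then invoking the Hurwitz property of $F$---is precisely the paper's proof. Your longer direct-differentiation route is also valid but unnecessary here; the paper dispenses with it in a single line.
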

\begin{proof}
Rearranging terms in \eqref{eq:Tx_M}, we immediately obtain \eqref{dot:beta}, where $F$ is Hurwitz by design.
\end{proof}

\begin{remark}\it 
Transverse coordinates have been widely explored in various fields, especially in nonlinear control, such as path following of mechanical systems \cite{HLAetal}, orbital stabilization \cite{YIetal}, transverse stability analysis \cite{ANDetal}, and immersion and invariance control \cite{ASTORT}.
\end{remark}

\section{Canonical Non-Minimal Realization for Data-Driven Control}

We now show that the filter \eqref{eq:dot_M} can be interpreted as an exponentially convergent observer of a \emph{canonical non-minimal realization} of the plant \eqref{eq:dot_x}, of which we recall the definition.
\begin{definition}[\cite{BOSetal25}]\label{def:canonical}
\it
Under Assumption \ref{hyp:minimal}, a non-minimal realization of system \eqref{eq:dot_x} of the form
\begin{equation*}
 \dot\xi = A_\xi \xi + B_\xi u, \quad y = C_\xi \xi,
\end{equation*}
with $\xi \in \rea^{n_\xi}$ and $n_\xi >n$, is said to be \emph{canonical} if there exist a Hurwitz matrix $F_\xi \in \rea^{n_\xi \times n_\xi}$ and a matrix $L_\xi \in \rea^{n_\xi \times \dimy}$ such that $A_\xi = F_\xi + L_\xi C_\xi$.
In particular, $x = \Pi \xi$, where $\Pi$ is the solution to the equations
\begin{equation}
\label{eq:non_min_equation}
\begin{aligned}
    \Pi(F_\xi + L_\xi C_\xi) & = A \Pi, \quad \Pi B_\xi = B
    \\
    C_\xi & = C \Pi.
\end{aligned}
\end{equation}
\end{definition}
To highlight that the dynamics of $(x, M)$ restricted onto the set $\mathcal{A}$ is a canonical non-minimal realization of \eqref{eq:dot_x}, we first provide the vectorization of the filter dynamics \eqref{eq:dot_M}.
\begin{lemma}
\it \label{lem:vec}
Let $\hat \xi:= \mathrm{vec}(M)$.
Then, it holds that
\begin{equation}
\label{dot:hat_xi}
    \dot{\hat \xi} = F_\xi \hat \xi + B_{\xi} u + L_{\xi} y
\end{equation}
where
\begin{equation}
\begin{aligned}
    F_\xi & 
    \coloneqq I_{\colM} \otimes F
    \\
    B_\xi &
    \coloneqq 
    \begmat{0_{\,\dimy \dimx^2\times \dimu}\\
    I_{\dimu}\otimes \mathrm{vec}(I_{\dimx})
 }
    \\
    L_\xi &
    \coloneqq
    \begmat{
    I_{\dimy}\otimes \mathrm{vec}(I_{\dimx})\\
    0_{\dimu \dimx^2 \times \dimy}
    }.
\end{aligned}
\end{equation}
\end{lemma}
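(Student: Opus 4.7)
The plan is to apply the vectorization operator to both sides of the matrix ODE \eqref{eq:dot_M} and then identify each resulting term with the stated blocks of $F_\xi$, $B_\xi$, and $L_\xi$. The main tools are the linearity of $\mathrm{vec}(\cdot)$, the standard identity $\mathrm{vec}(PXQ) = (Q^\top \otimes P)\,\mathrm{vec}(X)$, and the block structure of $\mathrm{vec}$ applied to a horizontally partitioned matrix.

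First, I would write the input-dependent forcing term as a horizontal concatenation, so that
$$
\mathrm{vec}\!\begmat{y^\top \otimes I_n & u^\top \otimes I_n}
= \begmat{\mathrm{vec}(y^\top \otimes I_n) \\ \mathrm{vec}(u^\top \otimes I_n)},
$$
since $\mathrm{vec}$ of a block-row matrix stacks the block vectorizations. Next, I would compute $\mathrm{vec}(y^\top \otimes I_n)$ and $\mathrm{vec}(u^\top \otimes I_n)$ explicitly. Writing $y^\top \otimes I_n = \begmat{y_1 I_n & \cdots & y_p I_n}$, the $\mathrm{vec}$ operator stacks these blocks columnwise, giving $\mathrm{vec}(y^\top \otimes I_n) = y \otimes \mathrm{vec}(I_n)$; the same reasoning gives $\mathrm{vec}(u^\top \otimes I_n) = u \otimes \mathrm{vec}(I_n)$. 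I would then use the mixed-product identity $y \otimes \mathrm{vec}(I_n) = (I_p \otimes \mathrm{vec}(I_n))\,y$, and analogously for $u$, to pull out the input and output signals.

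For the drift term, since $FM = F M I_\mu$, the standard Kronecker identity yields $\mathrm{vec}(FM) = (I_\mu \otimes F)\,\mathrm{vec}(M) = F_\xi \hat{\xi}$, which matches the claimed $F_\xi$. Combining the two computations gives
$$
\dot{\hat{\xi}} = \mathrm{vec}(\dot M) = F_\xi \hat{\xi} + \begmat{(I_p \otimes \mathrm{vec}(I_n))\,y \\ (I_m \otimes \mathrm{vec}(I_n))\,u},
$$
and padding each block with the appropriate zero rows to produce a single $B_\xi u + L_\xi y$ expression recovers exactly the definitions of $B_\xi$ and $L_\xi$ stated in the lemma.

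There is no real analytical obstacle here; the argument is a bookkeeping exercise with Kronecker products. The only step requiring care is the computation of $\mathrm{vec}(y^\top \otimes I_n)$, where one must be explicit about which variable ($y$ or $u$) takes the role of the vectorized signal and which block of the stacked vector receives the zero padding, in order to match the block partition of $\hat{\xi}$ into its $M_y$- and $M_u$-components induced by $M = \begmat{M_y & M_u}$.
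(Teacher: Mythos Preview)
Your proposal is correct and follows essentially the same vectorization strategy as the paper's proof: apply $\mathrm{vec}(\cdot)$ to \eqref{eq:dot_M}, use $\mathrm{vec}(FM)=(I_\mu\otimes F)\mathrm{vec}(M)$ for the drift, and reduce the forcing term to the identity $\mathrm{vec}(a^\top\otimes I_n)=(I_k\otimes\mathrm{vec}(I_n))a$. Your derivation of that identity---via the block structure $y^\top\otimes I_n=[y_1 I_n\;\cdots\;y_p I_n]$ and the mixed-product rule---is in fact slightly more direct than the paper's, which obtains the same result by expanding $a$ along the standard basis and summing; the difference is purely cosmetic.
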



\begin{proof}
 The filter \eqref{eq:dot_M} can be written as 
\begin{equation}
\label{eq:dot_M2}
\dot{M} = FM + 
     \begmat{0_{\dimx\times \dimy\dimx} & u^{\top} \otimes I_{\dimx} } + \begmat{y^{\top}\otimes I_{\dimx} & 0_{\dimx\times\dimu\dimx}}.
\end{equation} 
We now perform vectorization of the terms in the right-hand side of \eqref{eq:dot_M2}.
For the first term, we obtain
\begin{equation}
\label{eq:vec2}
\begin{aligned}
    \operatorname{vec}(FM) & = (I_{\colM}\otimes F)\vec(M) = F_{\xi}\vec(M)
\end{aligned}
\end{equation}
For the second term in \eqref{eq:dot_M2}, it holds that
\begin{equation*}
\vec \left(\begmat{0_{\,\dimx\times \dimy\dimx} & u^{\top} \otimes I_{\dimx} } \right)
=
\vec\left(\left(\begmat{      0_{\dimy \times \dimu}\\
      I_{\dimu}} u\right)^\top \otimes I_{\dimx}
\right).
\end{equation*}
For convenience, let $a \coloneqq (u^\top [0_{\dimu \times \dimy}\;\;
      I_{\dimu}]^\top)^\top$ and use the relation $
\vec\bigl(\mathrm{e}_i^{\top}\otimes I_{\dimx}\bigr)
     = \mathrm{e}_i\otimes\operatorname{vec}I_{\dimx},
$
where $\mathrm{e}_i \in \mathbb{R}^{\dimy + \dimu}$ denotes the $i$-th component of the standard basis of $\rea^{\dimy + \dimu}$, and $a_i$ denotes the $i$-th entry of $a$. We can rewrite \eqref{eq:vec2} as
\begin{equation*}
\begin{aligned}
           \operatorname{vec}(a^{\top}\otimes I_{\dimx})
=
& \vec\left(\sum_{i=1}^{\dimy + \dimu}a_i\bigl(\mathrm{e}_i^{\top}\otimes I_{\dimx}\bigr)\right) 
\\
=&\sum_{i=1}^{\dimy + \dimu} \vec(\mathrm{e}_i^\top \otimes I_n)a_i 
\\=& \sum_{i=1}^{\dimy + \dimu} \bigl(\mathrm{e}_i \otimes \vec(I_n)\bigr)a_i
\\=& \bigl(I_{\dimy + \dimu}\otimes\operatorname{vec}(I_{\dimx})\bigr) a
\\
= & \bigl(I_{\dimy + \dimu}\otimes\operatorname{vec}(I_{\dimx})\bigr)\begmat{0_{\dimy \times \dimu}\\
      I_{\dimu}}u. \\
 =     &  
\begin{bmatrix}
0_{\,\dimy \dimx^2\times \dimu}\\
I_{\dimu}\otimes \mathrm{vec}(I_{\dimx})
\end{bmatrix}u = B_{\xi}u.
\end{aligned}
\end{equation*}

Using similar arguments, the third term in \eqref{eq:dot_M2} becomes
\begin{equation*}
    \vec
    \left(
    \begmat{y^{\top}\otimes I_{\dimx} & 0_{\dimx \times \dimu \dimx}}
    \right) 
    = \begin{bmatrix}
    I_{\dimy}\otimes \operatorname{vec}(I_{\dimx})\\
    0_{\dimu \dimx^2\times \dimy}.
    \end{bmatrix} y = 
   L_\xi y.
\end{equation*}
Combining the above computations, the vectorization of the dynamics \eqref{eq:dot_M} is given by
\begin{equation}
        \vec(\dot{M})
        =
        F_\xi \vec(M)
        + 
        B_\xi u
        +
        L_\xi y,
\end{equation}
which is exactly \eqref{dot:hat_xi}.
\end{proof}
We are ready to show the structure of the canonical non-minimal realization of system \eqref{eq:dot_x} as per Definition \ref{def:canonical}.
\begin{lemma}
\it
\label{lem:stablizable}
Under Assumption \ref{hyp:minimal}, the system 
\begin{equation}
\label{dot:xi}
\begin{aligned}
\dot \xi &= A_\xi \xi + B_\xi u
\\
y_\xi &= C_{\xi}\xi
\end{aligned}
\end{equation}
with $\xi \in \rea^{n_\xi}$, $n_\xi \coloneqq\dimx^2(\dimy + \dimu)$, and
\begin{equation}
\label{eq:vec_matrix_param}
\begin{aligned}
    A_\xi  \coloneqq F_\xi + L_\xi C_\xi
    , \quad 
    C_\xi  \coloneqq C(\theta^{\top}\otimes T^{-1})
\end{aligned}
\end{equation}
is a canonical non-minimal realization of the plant \eqref{eq:dot_x}. Moreover, the pair $(A_\xi, B_\xi)$ is stabilizable. 
\end{lemma}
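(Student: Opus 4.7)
The plan is to exhibit a surjective matrix $\Pi \in \rea^{n\times n_\xi}$ satisfying the three relations of Definition \ref{def:canonical} to certify canonicity, and then to establish stabilizability through an invariant-subspace decomposition along $\ker\Pi$.

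Guided by the identity $Tx = M\theta$ that defines the invariant set $\mathcal{A}$ in \eqref{eq:cala}, together with $M\theta = (\theta^\top \otimes I_n)\mathrm{vec}(M)$, the natural candidate is $\Pi \coloneqq \theta^\top \otimes T^{-1}$, which yields the convenient formula $\Pi \mathrm{vec}(X) = T^{-1} X \theta$ for any $X \in \rea^{n\times\mu}$. The identity $C_\xi = C\Pi$ is then exactly \eqref{eq:vec_matrix_param}. To check $\Pi B_\xi = B$ and $\Pi L_\xi = L$, I would invoke the expressions of $B_\xi$ and $L_\xi$ from Lemma \ref{lem:vec}, the block form $\theta = [\mathrm{vec}(TL)^\top, \mathrm{vec}(TB)^\top]^\top$, and the standard Kronecker identity $(v^\top \otimes I_n)\mathrm{vec}(R) = Rv$; for instance, $\Pi L_\xi y = T^{-1}\,TLy = Ly$ for every $y$, and $\Pi B_\xi u = Bu$ analogously. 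The remaining identity $\Pi F_\xi = (A-LC)\Pi$ then follows from the row-vector mixing rule $M(a\otimes B) = a\otimes MB$ combined with the similarity $F = T(A-LC)T^{-1}$ established in the proof of Lemma \ref{lem:Kreisselmeier}. Summing, $\Pi A_\xi = \Pi F_\xi + \Pi L_\xi C_\xi = (A-LC)\Pi + LC\Pi = A\Pi$, so \eqref{dot:xi} is a canonical non-minimal realization.

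For stabilizability, the cornerstone observation is that $\mathcal{V}\coloneqq\ker\Pi$, of dimension $n_\xi - n$, is $A_\xi$-invariant by virtue of $\Pi A_\xi = A\Pi$. On $\mathcal{V}$ the output injection vanishes because $C_\xi = C\Pi$, so $A_\xi|_\mathcal{V} = F_\xi|_\mathcal{V}$, a restriction of the Hurwitz matrix $F_\xi = I_\mu \otimes F$ and therefore Hurwitz itself. The next step would be to choose a basis $V \in \rea^{n_\xi \times (n_\xi - n)}$ of $\mathcal{V}$ and a complement $W \in \rea^{n_\xi \times n}$ with $\bar\Pi \coloneqq \Pi W$ invertible; in the coordinates $\xi = W\zeta_1 + V\zeta_2$, the dynamics take the lower block-triangular form
\begin{equation*}
\dot\zeta_1 = A_{11}\zeta_1 + B_1 u, \quad \dot\zeta_2 = A_{21}\zeta_1 + A_{22}\zeta_2 + B_2 u,
\end{equation*}
where $A_{22}$ is Hurwitz and $(A_{11}, B_1) = (\bar\Pi^{-1} A \bar\Pi, \bar\Pi^{-1} B)$ is similar to the controllable pair $(A, B)$. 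Taking $K = [K_1, 0]$ with $K_1$ chosen so that $A_{11} + B_1 K_1$ is Hurwitz then makes $A_\xi + B_\xi K$ Hurwitz, establishing stabilizability.

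The main obstacle will be the Kronecker bookkeeping in the first step: one must carefully apply the mixing rules to non-square or row-vector factors when computing quantities like $(\theta^\top \otimes T^{-1})L_\xi$ and $(\theta^\top \otimes T^{-1})(I_\mu \otimes F)$. Once these identities are secured, the stabilizability claim reduces to the single clean observation that $F_\xi$ governs the uncontrollable dynamics on $\ker\Pi$ and supplies the required Hurwitz behavior.
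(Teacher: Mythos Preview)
Your argument is correct and complete. You identify the same matrix $\Pi = \theta^\top \otimes T^{-1}$ as the paper, but the verification strategies differ. The paper derives $\Pi A_\xi = A\Pi$ and $\Pi B_\xi = B$ \emph{dynamically}: it uses Lemma~\ref{lem:Kreisselmeier} to argue that $x(t) = \Pi\hat\xi(t)$ on the invariant set $\mathcal{A}$, differentiates, and then invokes that the resulting identity must hold for all $(\hat\xi, u)$ since $\Pi$ has full row rank. You instead verify each relation of Definition~\ref{def:canonical} by direct Kronecker algebra, obtaining along the way the finer intermediate identities $\Pi L_\xi = L$ and $\Pi F_\xi = (A-LC)\Pi$, which the paper's argument leaves implicit. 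For stabilizability, the paper simply invokes \cite[Corollary~1]{BOSetal25}, whereas you give a self-contained proof via the $A_\xi$-invariant subspace $\ker\Pi$, on which $A_\xi$ coincides with $F_\xi$ and is therefore Hurwitz, with the quotient similar to the controllable pair $(A,B)$. Your route is more elementary and fully self-contained; the paper's route emphasizes the geometric picture of the invariant set $\mathcal{A}$, which it reuses later for the transverse-coordinate analysis, and keeps the stabilizability step short by citation. One small point worth making explicit in your write-up: surjectivity of $\Pi$ (needed both for the ``all $(\hat\xi,u)$'' style argument and for $\dim\ker\Pi = n_\xi - n$) follows from $\theta \neq 0$, which holds because controllability of $(A,B)$ forces $TB \neq 0$.
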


\begin{proof}
By Lemma \ref{lem:Kreisselmeier}, the set $\cala$ defined in \eqref{eq:cala} is an invariant subspace.
Hence, it holds that
\begin{equation}
\label{eq:IC}
  \left.
  \begin{aligned}
        Tx(0) &= M(0)\theta
        \\
        \hat \xi(0) & = \vec(M(0))
  \end{aligned}
  \right\}
~\implies~ 
   x(t) = \Pi \hat\xi(t), \; \forall t\ge 0
\end{equation}
for any $u(t)$, where $\Pi$ is a full-row rank matrix defined as
\begin{equation}
    \Pi \coloneqq \theta^\top \otimes T^{-1}.
\end{equation}
From the definition of $C_\xi$ in \eqref{eq:vec_matrix_param}, we get
\begin{equation}
\label{eq:C}
C_\xi =C\Pi.
\end{equation}
Furthermore, on the set $\mathcal{A}$, we can compute the derivative of the right-hand side of \eqref{eq:IC} to obtain         
\begin{equation}\label{eq:dynamics_on_A}
    \begin{aligned}
        & Ax +Bu = \Pi (F_\xi \hat \xi + B_\xi u + L_\xi y) \\
        \implies \quad & A\Pi \hat \xi + Bu = \Pi(F_\xi \hat \xi + B_\xi u + L_\xi C \Pi\hat \xi) \\
        \implies \quad & A\Pi \hat\xi +Bu =\Pi A_\xi\hat\xi + \Pi B_\xi u.
    \end{aligned}
\end{equation}
Note that \eqref{eq:dynamics_on_A} holds for all $\col(\hat{\xi}, u) \in \rea^{n_\xi + \dimu}$.
In fact, since $\Pi$ has full-row rank, the set $\mathcal{A}$ is a linear subspace of dimension $n_{\xi}$ that corresponds, from $\hat{\xi} = \vec (M)$, to the graph of the map $\hat{\xi} \mapsto \Pi \hat{\xi}$, with free parameter $\hat{\xi} \in \rea^{n_\xi}$.
We thus conclude that the following equations hold:
\begin{equation}
\label{eq:AB}
A\Pi = \Pi A_{\xi}, \quad 
B = \Pi B_{\xi}.
\end{equation}
Finally, $A_\xi$ can be decomposed as $A_\xi = F_\xi + L_\xi C_\xi$, with $F_\xi$ Hurwitz.
We have thus obtained \eqref{eq:non_min_equation}, which proves that \eqref{eq:vec_matrix_param} is a canonical non-minimal realization of the plant \eqref{eq:dot_x}.
We conclude the proof by noting that the stabilizability of $(A_\xi, B_\xi)$ follows directly from \cite[Corollary 1]{BOSetal25}.
\end{proof}

\begin{remark}
\it 
If the initial condition of the non-minimal realization \eqref{dot:xi} satisfies $\Pi \xi(0) = x(0)$, then the filter \eqref{dot:hat_xi} in the vectorized form is an exponentially convergent observer of $\xi$, i.e. $\lim_{t\to\infty}|\hat \xi(t) - \xi(t)|=0$.
To see this, define the error variable $e_\xi \coloneqq \hat \xi - \xi$, whose dynamics are given by $\dot e_\xi = F_\xi e_\xi + L_\xi(Cx - C_\xi \xi)$.
From this special condition, we have $Cx= C_\xi \xi$.
Combining with the fact that $F_\xi$ is Hurwitz, we conclude the exponential stability of the dynamics of $e_\xi$.
\end{remark}

\begin{example}
\label{ex:1}\it
Consider a three-dimensional system with two inputs, two outputs
($\dimx = 3$, $\dimy = \dimu = 2$) and matrices
\[
A = \begin{bmatrix}
1 & 2 & 0 \\
0 & 2 & 1 \\
3 & 0 & 1
\end{bmatrix}, \quad
B = \begin{bmatrix}
1 & 0 \\
0 & 1 \\
1 & 2
\end{bmatrix}, \quad
C = \begin{bmatrix}
1 & 0 & 2 \\
0 & 1 & 1
\end{bmatrix}.
\]
We select $F = \operatorname{diag}(-20,-36,-45)$. Following the proofs of
Lemma~\ref{lem:Kreisselmeier}, we obtain
\[
T =
\begin{bmatrix}
6.1083 & -0.2340 & 17.5900 \\
-5.5982  & -2.9025  & -17.1889   \\
-1.4337 & 2.8633 & -0.0868
\end{bmatrix}
\]
and the corresponding column vector $\theta=[\theta_1, \theta_2]^\top$ with
$\theta_1= [$\rm{181.0436, $-$258.7007, $-$59.0428, 7.0687, $-$121.4904, 117.3910}$]$
\it and 
$\theta_2=[$\rm{23.6983, $-$22.7871, $-$1.5205, 34.9460, $-$37.2803, 2.6897}$]$. \it By the result of Lemmas~\ref{lem:vec} and~\ref{lem:stablizable}, we obtain
$A_{\xi}$ and $B_{\xi}$. The matrix $A_{\xi}$ has three positive eigenvalues
given by $\{3.2188,\, 0.3906 \pm 1.5274\mathrm{i}\}$, corresponding to the
eigenvalues of $A$. For all of these eigenvalues, the PBH test yields
$\operatorname{rank}[\lambda_i I - A_{\xi} \mid B_{\xi}] = 3$, which implies
that the pair $(A_{\xi},B_{\xi})$ is stabilizable. Note that the observability
indices of this system are $\nu_1 = 2$ and $\nu_2 = 1$, which are not uniform,
and thus the filter structure of \cite{BOSetal25} is not applicable.
\end{example}

\section{Data-driven Output-Feedback Stabilization}
\label{sec:34}

In this section, we exploit the previous technical results to design an output-feedback stabilizing controller from the dataset $\cale_{\tt D}$ in \eqref{eq:dataset}, under some excitation conditions of the plant \eqref{eq:dot_x}.

\subsection{Dynamics Compatible with the Input--Output Data}

Given $\cale_{\tt D}$, we can post-process the available trajectories $(u(t),y(t))$ over the interval $[0,t_{\tt D}]$ with the filter \eqref{eq:dot_M}.
By Lemma \ref{lem:stablizable}, to make the system \eqref{eq:dot_M} (or, equivalently, \eqref{dot:hat_xi}) describe the same input--output response as \eqref{eq:dot_x}, we would need to impose the constraint on the initial condition
\begin{equation}
    \Pi \vec(M(0)) = \Pi\hat{\xi}(0) = x_0,
\end{equation}
which would however require knowledge of $x_0$ and the matrices $A$, $B$, and $C$.

As a consequence, instead, we choose for simplicity the initial condition
\begin{equation}\label{eq:M_0}
    M(0) = 0_{n \times \colM},
\end{equation}
thus obtaining
\begin{equation}\label{eq:estimation_error}
\begin{aligned}
x = T^{-1} (M\theta  + \eps) = \Pi \hat \xi + T^{-1} \eps.
\end{aligned}
\end{equation}
We can then rewrite the interconnection of \eqref{eq:dot_x} and \eqref{eq:dot_M} in the coordinates $(\beta, \hat{\xi})$:
\begin{equation}
\label{dot:beta_hat_xi}
\begin{aligned}
    \dot \eps & = F \eps 
    \\
    \dot{\hat \xi} & 
    =
    A_\xi \hat \xi + B_\xi u + L_\xi(Cx - C_\xi \hat\xi)
    \\
    & = 
 A_\xi \hat \xi + B_\xi u + L_\xi C(x - \Pi \hat \xi)
    \\
    & =
     A_\xi \hat \xi + L_\xi C T^{-1} \eps + B_\xi u,
\end{aligned}
\end{equation}
which is compatible with the dataset $\cale_{\tt D}$ in \eqref{eq:dataset} by letting $\hat{\xi}(0) = 0_{n \mu}$ as per \eqref{eq:M_0} and $\beta(0) = Tx(0)$ due to \eqref{eq:estimation_error}.
As a consequence, by processing the dataset with the filter \eqref{eq:dot_M}, we have the access to the data of $M$ and $\dot M$ (equivalently, $\hat \xi$ and $\dot{\hat{\xi}}$)\footnote{Hereafter, when we say that a derivative is available, we mean that the right-hand side of the corresponding differential equation (e.g., the implemented filter \eqref{eq:dot_M}) is given by available signals and thus can be sampled.}, while $\eps(t)$ is not available because its initial condition is unknown.
We select the matrix $F$ satisfying the following.\footnote{For ease of presentation, we impose the following condition of the user-selected matrix $F$. It can be extended to the case of complex eigenvalues.} 

\begin{condition}
\it\label{cond:1}
The eigenvalues of the Hurwitz matrix $F$ are distinct and real. 
\end{condition}

Then, we can diagonalize the matrix $F$ as $F = T_F \Lambda T_F^{-1}$ for some invertible $T_F$ and $\Lambda = \diag(\lambda_1, \ldots, \lambda_n)$, where $\lambda_j$ ($j \in \{ 1, \ldots, n\}$) are the eigenvalues of $F$.
Thus, the unavailable $\eps$ can be represented as
\begin{equation}
\begin{aligned}
    \eps(t) & = T_F  e^{\Lambda t} T_F^{-1} \eps(0)
    =
    \Gamma \chi(t),
\end{aligned}
\end{equation}
for some matrix $\Gamma$ and $\chi(t) \in \rea^n$ that satisfies the following differential equation, which can be implemented together with the filters \eqref{eq:dot_M}:
\begin{equation}\label{eq:dot_eps}
\begin{aligned}
    \dot\chi(t) & = \Lambda \chi(t), \quad \chi(0)= \begin{bmatrix}
        1, \dots,1
    \end{bmatrix}^{\top} \in \rea ^{n}.
\end{aligned}
\end{equation}
As a consequence, by letting
$$
z \coloneqq \begmat{\chi \\ \hat \xi},
$$
we obtain the following dynamics:
\begin{equation}
\label{eq:composite_sys}
\dot z =
\underbrace{\begin{bmatrix}
        \Lambda & 0\\
        L_\xi C T^{-1} \Gamma & A_\xi
    \end{bmatrix}
    }_{A_e}
    z 
    +
        \underbrace{\begin{bmatrix}
        0 
        \\
        B_{\xi}
    \end{bmatrix}
    }_{B_e}
    u,
\end{equation}
which are compatible with the dataset $\cale_{\tt D}$ in \eqref{eq:dataset} by letting $\hat{\xi}(0) = 0_{n\mu}$, $\chi(0) = [1\; \cdots \;1]^\top$, and where $z$, $\dot{z}$, and $u$ are available for measurement, without any need for computing dirty derivatives.
Since $(A_\xi, B_\xi)$ is stabilizable and $\Lambda$ is Hurwitz, the extended pair $(A_e, B_e)$ in system \eqref{eq:composite_sys} is stabilizable. 

We obtain the following result, providing the desired class of controllers for addressing Problem \ref{problem:1}.

\begin{lemma}
\label{prop:stab_filt_stab_sys}
{\em Let Assumption \ref{hyp:minimal} hold and consider system \eqref{eq:composite_sys}.
For any matrix $K_e$ making $A_e + B_e K_e$ Hurwitz, the output-feedback controller}
\begin{equation}
\label{eq:u1}
\begin{aligned}
\dot{M} &= FM
+ \begmat{y^\top \otimes I_n &  u^\top \otimes I_n }, 
\quad M(0)=0_{n\times \mu}
\\
u & = K_e \begmat{0_n \\ \mbox{vec}(M)}
\end{aligned}
\end{equation}
{\em is such that the origin $(x, M) = 0$ of the closed-loop interconnection of \eqref{eq:dot_x} and \eqref{eq:u1} is globally exponentially stable.}
\end{lemma}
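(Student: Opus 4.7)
The plan is to establish global exponential stability by moving to coordinates in which the closed loop becomes a cascade of two Hurwitz LTI subsystems, and then transporting the conclusion back to the original $(x, M)$ coordinates. First I would perform the invertible linear change of coordinates $(x, M) \mapsto (\beta, M)$, where $\beta = Tx - M\theta$ as in \eqref{eq:beta}; invertibility follows from the non-singularity of $T$ guaranteed by Lemma~\ref{lem:Kreisselmeier}. Equivalently, working with $\hat\xi = \mathrm{vec}(M)$, Corollary~\ref{lem:transverse} gives $\dot\beta = F\beta$, and combining Lemma~\ref{lem:vec} with the identity $x = \Pi \hat\xi + T^{-1}\beta$ (i.e., \eqref{eq:estimation_error}) yields $\dot{\hat\xi} = A_\xi \hat\xi + L_\xi C T^{-1}\beta + B_\xi u$.

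Next, I would decompose the gain as $K_e = [K_1 \ \ K_2]$ conformably with $z = [\chi;\, \hat\xi]$, so that $K_1 \in \rea^{m \times n}$ multiplies the $\chi$-block and $K_2 \in \rea^{m \times n_\xi}$ multiplies the $\hat\xi$-block. Under the controller \eqref{eq:u1}, the input reduces to $u = K_e [0_n;\, \hat\xi] = K_2 \hat\xi$, and the closed-loop in $(\beta, \hat\xi)$ reads
\begin{equation*}
\begin{bmatrix} \dot\beta \\ \dot{\hat\xi} \end{bmatrix}
= \begin{bmatrix} F & 0 \\ L_\xi C T^{-1} & A_\xi + B_\xi K_2 \end{bmatrix}
\begin{bmatrix} \beta \\ \hat\xi \end{bmatrix}.
\end{equation*}
A direct computation from \eqref{eq:composite_sys} shows that
\begin{equation*}
A_e + B_e K_e = \begin{bmatrix} \Lambda & 0 \\ L_\xi C T^{-1}\Gamma + B_\xi K_1 & A_\xi + B_\xi K_2 \end{bmatrix},
\end{equation*}
which is block lower triangular. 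Hence its spectrum is $\sigma(\Lambda) \cup \sigma(A_\xi + B_\xi K_2)$, and since $\Lambda$ is Hurwitz by Condition~\ref{cond:1}, the hypothesis that $A_e + B_e K_e$ is Hurwitz forces $A_\xi + B_\xi K_2$ to be Hurwitz.

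The closed-loop is therefore a cascade of two Hurwitz LTI blocks: the $\beta$-subsystem is GES with rate dictated by $F$, and it enters the $\hat\xi$-subsystem linearly through $L_\xi C T^{-1}\beta$, which drives an autonomous Hurwitz system. Standard LTI cascade arguments (e.g., solving the variation-of-constants formula or checking the triangular structure of the compound state matrix) immediately yield GES of the origin in $(\beta, \hat\xi)$. Since $(x, M) \leftrightarrow (\beta, \hat\xi)$ is a linear bijection, GES transfers back to the original coordinates, establishing the claim.

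The main conceptual point to verify, rather than a serious obstacle, is that setting $\chi \equiv 0_n$ in the feedback of \eqref{eq:u1} is harmless because the $\chi$-dynamics are decoupled and already Hurwitz; $\chi$ serves only as a modeling device encoding the unknown transverse coordinate $\beta$ for the data-driven LMI construction, and does not need to be fed back for stabilization. The only calculation that requires care is the derivation of the block triangular form of $A_e + B_e K_e$ and the accompanying verification that the spectral condition reduces to $A_\xi + B_\xi K_2$ being Hurwitz; everything else is routine cascade analysis and a linear change of coordinates.
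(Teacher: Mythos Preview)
Your argument is correct and actually more self-contained than the paper's. The paper takes a slightly different route: it first observes that the feedback $u = K_e z$ (with the full $z = [\chi;\hat\xi]$) renders the $z$-dynamics GES because $A_e + B_e K_e$ is Hurwitz by hypothesis, and then notes that the controller \eqref{eq:u1} differs from this only by the term $K_1\chi(t)$, which is an exponentially decaying signal injected into a GES LTI system; standard perturbation arguments then give GES of $(x,M)$.

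Your approach bypasses the perturbation step entirely. By exploiting the block lower-triangular structure of $A_e + B_e K_e$ you extract directly that $A_\xi + B_\xi K_2$ is Hurwitz, and then write the closed loop in $(\beta,\hat\xi)$ as an autonomous block-triangular LTI system with Hurwitz diagonal blocks $F$ and $A_\xi + B_\xi K_2$. This is cleaner in that no asymptotic perturbation reasoning is needed---the closed-loop matrix is shown Hurwitz outright. The paper's version, on the other hand, is shorter to state and makes the role of $\chi$ as an auxiliary ``ghost'' signal more transparent: one stabilizes the composite system with $\chi$ included, then simply drops $\chi$ from the feedback because its contribution vanishes exponentially. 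Both arguments are valid; yours is the more explicit of the two.
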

\begin{proof}
Consider the feedback law
\begin{equation}
\label{eq:u2}
u = K_e z,
\end{equation}
under which the system \eqref{eq:composite_sys} is GES. From \eqref{dot:xi}, we know that the closed loop is GES under \eqref{eq:u2}. 
Comparing the control laws \eqref{eq:u2} and \eqref{eq:u1}, the difference is an exponentially decaying signal injected into the closed-loop dynamics.
The proof follows from standard perturbation arguments.
\end{proof}

Given the dataset $\cale_{\tt D}$ and the trajectories of $z(t)$ generated by implementing the filters \eqref{eq:dot_M} and \eqref{eq:dot_eps}, the following data batches can be obtained:

\begin{equation}
\label{data_batch}
\begin{aligned}
     U \; & :=\; \begmat{u(t_1) & u(t_2) & \dots & u(t_N)}\in\mathbb R^{m\times N}
     \\
     Z \; & := \; \begmat{z(t_1) & z(t_2) & \dots & z(t_N)}\in\mathbb R^{n_z\times N}
     \\
     \dot Z \; & := \; \begmat{\dot z(t_1) & \dot z(t_2) & \dots & \dot z(t_N)}\in\mathbb R^{n_z\times N},
\end{aligned}
\end{equation}
where $n_z:=n+ n_\xi$ and $0 \leq t_1< t_2< \ldots < t_N \leq t_{\tt D}$ are the sampling instants.

By Lemma \ref{prop:stab_filt_stab_sys}, Problem \ref{problem:1} is solved if, using \eqref{data_batch}, we can find $K_e$ such that $A_e + B_e K_e$ is Hurwitz.

\subsection{Data-driven State Decomposition for Stabilizable Systems}

Given the dynamics \eqref{eq:composite_sys} and the batches \eqref{data_batch}, one may be tempted to directly apply the LMIs for data-driven stabilization by state feedback of \cite{DEPTES} to find a stabilizing gain $K_e$.
In this context, it is known that a sufficient condition for the feasibility of the LMIs would be
\begin{equation}
\label{cond:full_rank1}
    \rank \begmat{U \\ Z} = \dimu + \dimx + n_\xi.
\end{equation}
In our scenario, in general, \eqref{cond:full_rank1} cannot be guaranteed, regardless of the input $u(t)$ injected in the system.
In fact, since $(A_e, B_e)$ is stabilizable but not controllable and due to the high number of states introduced by the filter \eqref{eq:dot_M}, it is possible that the uncontrollable modes contained in $z(t)$ are linearly dependent, so that the following matrix
\begin{equation}\label{eq:Z}
\mathcal{Z} \coloneqq \int_{0}^{t_{\tt D}} z(s) z(s)^\top \text{d}s,
\end{equation}
loses rank.
In turn, $\rank \mathcal{Z} < n + n_\xi$ implies $\rank Z < n + n_\xi$, regardless of the sampling choice, causing the LMIs in \cite{DEPTES} to become unfeasible because a necessary rank condition fails, see \cite[Proof of Thm. 1]{BOSetal24}.

To circumvent this issue, in this work, we propose an approach that handles the scenario $\rank \mathcal{Z} < n + n_\xi$ by directly exploiting the stabilizability of the pair $(A_e, B_e)$.
Notably, similar ideas are under investigation also in the discrete-time context \cite{shakouri2025data}.

In particular, our key idea is to require that $\mathcal{Z}$ loses rank only in directions corresponding to some uncontrollable modes of $z(t)$, and can be formalized as follows:
\begin{equation}\label{eq:Im_Z}
    \im \mathcal{R} \subseteq \im \mathcal{Z},
\end{equation}
where $\mathcal{R} \coloneqq [B_e \;\; A_eB_e\;\; \cdots \;\; A_e^{n_z - 1}B_e]$ is the reachability matrix of the pair $(A_e, B_e)$.
Note that condition \eqref{eq:Im_Z} is weaker than $\mathcal{Z} \succ 0$ in the scenario where $\rank \mathcal{R} < n + n_\xi$.
Since $\mathcal{Z}$ can be computed from the available signal $z(t)$, it makes sense to exploit \eqref{eq:Im_Z} to search for a change of coordinates $T_z \in \rea^{n_z\times n_z}$, by which we divide $z$ as follows:
\begin{equation}\label{eq:z_ab}
    \begmat{z_a \\ z_b} = T_z z,
\end{equation}
where we aim to include all controllable modes in $z_a$\footnote{The partition $z_a$ may also possibly include uncontrollable modes.}. The approach to compute $T_z$ is described in Algorithm \ref{alg:Tz}.
\begin{algorithm}[ht]
\caption{Searching for the Transformation $T_z$}
\label{alg:Tz}
\begin{algorithmic}[1]
\Require Signal $z(t):[0,t_{\tt D}]\to\mathbb{R}^{n_z}$.
\Ensure Change of coordinates $T_z \in \mathbb{R}^{n_z\times n_z}$.
\State $\mathcal{Z} \gets 
\int_0^{t_{\tt D}} z(s) z(s)^\top \mathrm{d}s \in\mathbb{R}^{n_z\times n_z}$ 
\State Compute the singular value decomposition: $\mathcal{Z} = \mathcal{U}\Sigma \mathcal{U}^\top$ with $\mathcal{U}$ orthogonal, $\Sigma:= \diag(\Sigma_0, 0, \ldots,0)$, and $\Sigma_0:=\diag(\lambda_1, \ldots, \lambda_l)$, where $l \coloneqq 
\rank \mathcal{Z}$.

\Comment{$\lambda_1 \geq \cdots \geq \lambda_l > 0$.}
\State \Return $T_z \gets \mathcal{U}^\top$.
\end{algorithmic}
\end{algorithm}

Conformally, we partition the system matrix and the input matrix as
\begin{equation}
\label{eq:transformed_dynamic}
\begmat{\dot z_a \\ \dot z_b} = \begmat{A_a & A_{ab} \\ A_{ba} & A_b}\begmat{z_a \\ z_b} + \begmat{B_a \\ B_b}u.
\end{equation}
The following result ensures that $T_z$ from Algorithm \ref{alg:Tz} provides the decomposition we aim for.
\begin{lemma}
\label{lem:AB}\it
Suppose that the solution $z(t)$ satisfies condition \eqref{eq:Im_Z}.
Then, the similarity transformation $T_z\in \rea^{n_{z}\times n_{z}}$ obtained from Algorithm \ref{alg:Tz} is such that system \eqref{eq:transformed_dynamic} satisfies $A_{ba} = 0$, $B_b = 0$, and $A_b$ Hurwitz.
\end{lemma}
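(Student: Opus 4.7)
The plan is to work in the transformed coordinates $(z_a, z_b) = T_z z$ and exploit the SVD structure imposed by Algorithm~\ref{alg:Tz}. Since $T_z = \mathcal{U}^\top$ is orthogonal, we have $T_z \mathcal{Z} T_z^\top = \Sigma = \diag(\Sigma_0, 0)$, so the last $n_z - l$ block of the transformed Gramian vanishes. My first step would be the geometric identification $\im \mathcal{Z} = \span\{z(s) : s \in [0, t_{\tt D}]\}$: a vector $v$ lies in $\ker \mathcal{Z}$ iff $\int_0^{t_{\tt D}} |v^\top z(s)|^2 \mathrm{d}s = 0$, which by continuity of the absolutely continuous solution $z(\cdot)$ is equivalent to $v^\top z(s) \equiv 0$ on $[0, t_{\tt D}]$. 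In the new basis, this exactly says $z_b(t) \equiv 0$ on the data interval, and $\ker \mathcal{Z}$ coincides with the span of the last $n_z - l$ columns of $\mathcal{U}$.

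The identity $B_b = 0$ would then follow directly from hypothesis~\eqref{eq:Im_Z}: since $\im B_e \subseteq \im \mathcal{R} \subseteq \im \mathcal{Z}$, every column of $B_e$ is orthogonal to $\ker \mathcal{Z}$, and this orthogonality expressed in the new basis is precisely $B_b = 0$.

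To prove $A_{ba} = 0$, I would pick any $v \in \ker \mathcal{Z}$ and differentiate the identity $v^\top z(t) \equiv 0$. Absolute continuity of $z$ together with piecewise continuity of $u$ gives $v^\top A_e z(t) + v^\top B_e u(t) = 0$ almost everywhere on $[0, t_{\tt D}]$; combining with $v^\top B_e = 0$ from the previous step and using continuity of $A_e z(\cdot)$, one obtains $v^\top A_e z(t) = 0$ for all $t \in [0, t_{\tt D}]$. Since $\im \mathcal{Z} = \span\{z(t)\}$, linearity in finitely many samples then extends this to $v^\top A_e w = 0$ for every $w \in \im \mathcal{Z}$, which, translated to the new basis, is exactly $A_{ba} = 0$.

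Finally, with $A_{ba} = 0$ and $B_b = 0$ established, the transformed dynamics take the classical Kalman controllability decomposition form. For any eigenvalue $\lambda$ of $A_b$ with left eigenvector $w_b$, the augmented row $\begmat{0 & w_b^\top}$ is a left eigenvector of $T_z A_e T_z^{-1}$ orthogonal to $T_z B_e$; the PBH test identifies $\lambda$ as an uncontrollable mode of $(A_e, B_e)$, so stabilizability forces $\operatorname{Re}\lambda < 0$ and $A_b$ is Hurwitz. The most delicate step will be the third one: carefully justifying that the almost-everywhere identity $v^\top A_e z(t) = 0$ upgrades to pointwise everywhere by continuity of the right-hand side, and that the resulting pointwise identities can then be aggregated by linearity of $A_e$ to annihilate the entire subspace $\im \mathcal{Z}$, rather than only the finite set of sampled trajectory points.
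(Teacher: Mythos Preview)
Your proposal is correct and follows essentially the same approach as the paper. The paper works directly in the transformed coordinates $(z_a,z_b)$ throughout---showing $z_b\equiv 0$ from $T_z\mathcal{Z}T_z^\top=\diag(\Sigma_0,0)$, then $B_b=T_b B_e=0$ from $T_b\mathcal{R}=0$, then $A_{ba}z_a(t)=0$ from $\dot z_b\equiv 0$, and finally $A_{ba}=0$ by integrating against $z_a^\top$ to obtain $A_{ba}\Sigma_0=0$---whereas you phrase the same steps via the abstract identification $\im\mathcal{Z}=\span\{z(t)\}$ and $\ker\mathcal{Z}$-orthogonality before translating to the new basis; the Hurwitz conclusion via PBH is exactly the paper's ``$z_b$ is part of the uncontrollable subsystem'' argument spelled out.
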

\begin{proof}
From \eqref{eq:Im_Z}, we obtain that
\begin{equation}
    \ker \mathcal{Z}^\top \subseteq \ker \mathcal{R}^\top.
\end{equation}
The matrix $T_z$ obtained from Algorithm \ref{alg:Tz} can be split as
\begin{equation}
    T_z = \begmat{T_a \\ T_b},
\end{equation}
where in particular $T_b \in \rea^{(n_z - l) \times n_z}$ is associated with $z_b$.
Note that any row $v^\top$ of $T_b$ satisfies $v^\top \mathcal{Z} = 0$, i.e., $v \in \ker \mathcal{Z}^\top$.
Therefore, $v \in \ker \mathcal{R}^\top$.
It follows that $T_b \mathcal{R} = 0$ and, in particular,
\begin{equation}
    B_b = T_b  B_e = 0.
\end{equation}
The dynamics of $z_b$ becomes
\begin{equation}\label{eq:B_b_0}
    \dot{z}_b = A_b z_b + A_{ba}z_a.
\end{equation}
Applying the transformation $T_z$ to the matrix $\mathcal{Z}$ yields: 
$$
\begin{aligned}
T_z \mathcal{Z} T_z^\top = \int_{t}^{t_{\tt D}} \begmat{z_a(s)\\ z_b(s)} \begmat{z_a(s) \\ z_b(s)} \text{d}s  = 
\begmat{\Sigma_0 & 0 \\ 0 & 0}
\\
\implies
\int_0^{t_{\tt D}} z_b(s) z_b(s)^\top \text{d}s = 0\\
\implies \int_0^{t_{\tt D}} |z_b(s)|^2 \text{d}s = 0
\end{aligned}
$$
In particular, since $z(t)$ is an absolutely continuous signal, it necessarily holds that
\begin{equation}
\label{eq:z=0}
    z_b(t) = 0 ,\quad \text{for all } t\in [0, t_{\tt D}],
\end{equation}
and, furthermore, 
\begin{equation}
\label{eq:dotz=0}
    \dot{z}_b(t) = 0 ,\quad \text{for almost all } t\in [0, t_{\tt D}].
\end{equation}
From \eqref{eq:B_b_0}, we infer that $A_{ba} z_a(t) = 0$ for almost all $t \in [0, t_{\tt D}]$.
Since, similarly, $A_{ba}z_a(t)z_a^\top(t) = 0$, we conclude
\begin{equation}
    A_{ba}\int_{0}^{t_{\tt D}}z_a(s)z_a^\top(s)\mathrm{d}s = A_{ba}\Sigma_0 = 0 \iff A_{ba} = 0.
\end{equation}
Since we have proved that $A_{ba} = 0$ and $B_b = 0$, we obtain that $A_b$ is Hurwitz because $z_b$ is necessarily part of the uncontrollable subsystem, while the pair $(A_e, B_e)$ is stabilizable.
\end{proof}

\subsection{Data-driven Controller Synthesis}

\begin{algorithm}[!t]
\caption{Input/Output Data-driven Stabilization}
\label{alg:data-driven-stab}
\begin{algorithmic}[0]
\Require Dataset $\cale_{\tt D}:=\{u(t),y(t)\}_{[0,t_{\tt D}]}$.

\Statex \textbf{Initial Setting} 
\State
\textit{Filter parameters:} $F \in \mathbb{R}^{n\times n}$ Hurwitz and with $n$ distinct real eigenvalues.
\State
\textit{Number of samples:}  $N \ge 1$.

\Statex \textbf{Data Batches Construction}
\State
\textit{Filter of the data:} simulate for $t\in[0,t_{\tt D}]$:
\begin{equation*}
    \dot{M}(t) \!=\! FM(t) + \begmat{y^\top\!(t)\! \otimes \! I_{\dimx}\! & \! u^\top\!(t)\! \otimes \! I_{\dimx}}\!,\;\; M(0)\!=\!0_{n\times \colM}.
\end{equation*}
\State \textit{Auxiliary dynamics:} simulate for $t\in[0,t_{\tt D}]$:
\begin{equation*}
    \dot\chi  = \Lambda \chi, \qquad \chi(0)= \col(1,\ldots, 1). 
\end{equation*}
\State \textit{Filtered signal:} $z(t) \coloneqq (\chi(t), \vec(M(t))).$
\State 
\textit{State decomposition}: Run \textbf{Algorithm~\ref{alg:Tz}} and compute the signals $z_a(t)$, $\dot{z}_a(t)$ using $T_z$ as in \eqref{eq:z_ab}.

\State \textit{Data batches:} let $0 \leq t_1< t_2< \ldots < t_N \leq t_{\tt D}$ and
\begin{equation}\label{eq:batches}
    \begin{aligned}
        Z_a &\coloneqq \begmat{z_a(t_1) & z_a(t_2) & \ldots & z_a(t_N)} \in \rea^{l \times N}\\
        \dot{Z}_a &\coloneqq \begmat{\dot{z}_a(t_1) & \dot{z}_a(t_2) & \ldots & \dot{z}_a(t_N)} \in \rea^{l \times N}\\
    U&\coloneqq \begmat{\;u(t_1) & \;u(t_2) & \;\dots & \;u(t_N)}\in\mathbb R^{m\times N}
    \end{aligned}
\end{equation}

\Comment{$t_1,\ldots, t_N$ such that $\rank\begmat{Z_a\\ U} = l + m$.
}

\Statex \textbf{Stabilizing Gain Computation}
\State \textit{LMI:} find $Q\in\rea^{N\times l}$ such that
\begin{equation}
\label{eq:opt_LMI}
 \left\{
\begin{aligned}
& Z_aQ = Q^{\top}Z_a^{\top} \succ 0,\\
&\dot{Z}_a\,Q + Q^\top \dot{Z}_a^\top \prec 0.
\end{aligned}
\right.    
\end{equation}

\State \textit{Gain matrix:} \begin{equation}
\label{eq:control_gain}
    K_e = \begmat{UQ(Z_aQ)^{-1} & 0_{m\times(n_z-l)}}T_z.
\end{equation}

\Ensure Output-feedback law: \eqref{eq:u1}
\end{algorithmic}
\end{algorithm}

We now design an LMI that computes $K_e$ as specified in Lemma \ref{prop:stab_filt_stab_sys}.
The overall procedure is summarized in Algorithm \ref{alg:data-driven-stab} and is performed under the following assumption, based on the results of the previous subsection.

\begin{assumption}\label{hyp:IE}
    \it
    The dataset $\cale_{\tt D}$ in \eqref{eq:dataset} and the filtered signal $z(t) = (\chi(t), \mathrm{vec}(M(t)))$, generated by \eqref{eq:dot_M} with $M(0) = 0_{n \times \mu}$ and \eqref{eq:dot_eps}, satisfy \eqref{eq:Im_Z} and the following \emph{interval excitation} condition:
    \begin{equation}
        \label{IE:1}
        \int_0^{t_{\tt D}} \begmat{z_a(s)\\ u(s)} \begmat{z_a(s)\\ u(s)}^\top \mathrm{d}s \succ 0.
    \end{equation}
\end{assumption}
Under Assumption \ref{hyp:IE}, by \cite[Proposition 1]{WANetal}, it holds that there exist sampling instants $0 \leq t_1< t_2< \ldots < t_N \leq t_{\tt D}$ such that the sampled data batches \eqref{eq:batches} satisfy
\begin{equation}
\label{rank:za}
\rank \begmat{Z_a \\ U} = m + l.
\end{equation}
With the resulting data batches, it is possible to exploit the LMIs for data-driven stabilization of \cite{DEPTES} in the reduced coordinates.
In particular, we can compute the control gain $K_e$ of \eqref{eq:u1} by solving the LMI \eqref{eq:opt_LMI} and then applying \eqref{eq:control_gain}.
The theoretical properties ensured by Algorithm \ref{alg:data-driven-stab} are summarized in the next theorem, which is the main result of this work.
\begin{theorem}
\it
Consider Algorithm \ref{alg:data-driven-stab} under Assumptions \ref{hyp:minimal} and \ref{hyp:IE}.
Then, the matrix $K_e$ computed from \eqref{eq:opt_LMI}, \eqref{eq:control_gain} ensures that the matrix $A_e + B_e K_e$ is Hurwitz, where $A_e$ and $B_e$ are given in \eqref{eq:composite_sys}.
As a consequence, the dynamic output-feedback controller \eqref{eq:u1} solves Problem \ref{problem:1}.
\end{theorem}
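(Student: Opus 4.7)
The plan is to exploit the block-triangular structure induced by the change of coordinates $T_z$ from Algorithm \ref{alg:Tz} and reduce the problem to a standard continuous-time data-driven state-feedback stabilization on the $z_a$-subsystem. First, under Assumption \ref{hyp:IE}, Lemma \ref{lem:AB} guarantees that the system \eqref{eq:transformed_dynamic} is block upper triangular with $A_{ba}=0$, $B_b=0$, and $A_b$ Hurwitz. Moreover, the argument inside the proof of Lemma \ref{lem:AB} shows that $z_b(t)=0$ for all $t\in[0,t_{\tt D}]$; since $z_b$ arises from the filtered signal, $\dot z_b$ vanishes at every sampling instant where it is defined. Consequently, the batches in \eqref{eq:batches} satisfy the clean data identity $\dot Z_a = A_a Z_a + B_a U$. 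Furthermore, since $(A_e,B_e)$ is stabilizable by Lemma \ref{lem:stablizable} and its uncontrollable modes are confined to the Hurwitz $z_b$-subsystem, the reduced pair $(A_a,B_a)$ is itself stabilizable.

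Next, I would establish the feasibility of \eqref{eq:opt_LMI} and extract the Lyapunov certificate. By \cite[Proposition~1]{WANetal}, Assumption \ref{hyp:IE} permits sampling instants that achieve the rank condition \eqref{rank:za}. Combined with the stabilizability of $(A_a,B_a)$, this condition allows constructing a feasible $Q$ following the continuous-time analogue of \cite{DEPTES}: for any model-based stabilizer $K_a^\star$ with Lyapunov matrix $P^\star\succ 0$, the rank condition ensures the existence of $Q$ with $Z_a Q = P^\star$ and $U Q = K_a^\star P^\star$, which certifies feasibility. Conversely, for any feasible $Q$, letting $P := Z_a Q$ and $K_a := UQ P^{-1}$ and substituting $\dot Z_a = A_a Z_a + B_a U$ into $\dot Z_a Q$ gives the Lyapunov identity
\[
(A_a + B_a K_a) P + P (A_a + B_a K_a)^\top = \dot Z_a Q + Q^\top \dot Z_a^\top \prec 0,
\]
which proves that $A_a + B_a K_a$ is Hurwitz with Lyapunov matrix $P\succ 0$.

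Finally, from \eqref{eq:control_gain}, the feedback $u = K_e z$ is equivalent to $u = K_a z_a$ in the transformed coordinates. A direct block computation then gives
\[
T_z (A_e + B_e K_e) T_z^{-1} = \begin{bmatrix} A_a + B_a K_a & A_{ab} \\ 0 & A_b \end{bmatrix},
\]
which is block upper triangular with Hurwitz diagonal blocks, so $A_e + B_e K_e$ is Hurwitz. Applying Lemma \ref{prop:stab_filt_stab_sys} concludes that the dynamic output-feedback controller \eqref{eq:u1} renders the origin of the closed-loop interconnection of \eqref{eq:dot_x} and \eqref{eq:u1} globally exponentially stable, thereby solving Problem \ref{problem:1}.

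The main obstacle is the feasibility argument for \eqref{eq:opt_LMI}: because $(A_a, B_a)$ is only guaranteed to be stabilizable (not controllable), one cannot directly cite controllability-based versions of the De Persis--Tesi theorem, and the feasible $Q$ must be constructed explicitly from a model-based stabilizer through the rank condition \eqref{rank:za}. The rest of the argument is a clean composition of Lemmas \ref{lem:AB} and \ref{prop:stab_filt_stab_sys}, with the only other subtlety being to verify that the data equation $\dot Z_a = A_a Z_a + B_a U$ holds exactly at the sampling instants, which follows from $z_b(t)\equiv 0$ on $[0,t_{\tt D}]$.
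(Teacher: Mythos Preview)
Your proposal is correct and follows essentially the same route as the paper's proof: reduce to the $z_a$-subsystem via Lemma~\ref{lem:AB}, invoke the data-driven LMI argument of \cite{DEPTES} to make $A_a+B_aK_a$ Hurwitz, assemble the block upper-triangular closed loop, and conclude with Lemma~\ref{prop:stab_filt_stab_sys}. You supply more detail than the paper (the explicit data identity $\dot Z_a=A_aZ_a+B_aU$ from $z_b\equiv 0$, and the explicit feasibility construction under mere stabilizability of $(A_a,B_a)$), which the paper compresses into ``by standard arguments \cite{DEPTES}''; the only minor imprecision is your claim that the uncontrollable modes are \emph{confined} to $z_b$---in fact $z_a$ may also contain uncontrollable (but stable) modes, yet your stabilizability conclusion and the rest of the argument remain valid.
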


\begin{proof}
Using the change of coordinate \eqref{eq:transformed_dynamic}, the full rank condition \eqref{rank:za} implies the feasibility of the LMI \eqref{eq:opt_LMI}, so that by standard arguments \cite{DEPTES} the matrix $K =  UQ(Z_aQ)^{-1}$ ensures that $A_a + B_a K$ is Hurwitz.
As a consequence, the feedback gain $K_e$ in \eqref{eq:control_gain} ensures that
\begin{equation*}
    A_e + B_e K_e = T_z^{-1} \begmat{ A_a +B_a K & A_{ab} \\ 0 & A_b} T_z.
\end{equation*}
From Lemma \ref{lem:AB}, the matrix $A_b$ is Hurwitz.
Therefore, the closed-loop system matrix $(A_e+B_e K_e)$ is Hurwitz.
Invoking Lemma \ref{prop:stab_filt_stab_sys}, we complete the proof.
\end{proof}

\section{Numerical Results} 

In this section, we showcase our theoretical results by applying Algorithm \ref{alg:data-driven-stab} to the system of Example \ref{ex:1}.
All simulations are implemented in MATLAB using YALMIP \cite{Lofberg2004} and MOSEK \cite{mosek} to solve the LMI \eqref{eq:opt_LMI}.
The dataset $\cale_{\tt D}$ is generated by simulating the system in the interval $[0, 3]$ s by considering an input $u(t)$ that is the sum of four sinusoids for each channel.

We consider a periodic sampling with period $\Delta = 0.01$s and choose $F = \operatorname{diag}(-20,\ -36,\ -40)$. Using the proposed approach, the resulting control gain is $K_e = \begmat{ 0_{2\times 3} & K & 0_{2\times27} }$ with $K:=\col(K_1, K_2)$, $K_1 = [$6.7 , 2.3 , 10.5 , 9.3 , 55.6 , $-$52.9 , $-$157.2 , $-$15.6 , $-$16.1 , 150.6 , 307.7 , 0.3 $]$ and $K_2=[$7.9 , $-$14.2 , 3.4 , $-$19.1, 42.7 , 35.8 , $-$157.0 , $-$24.7 , 12.0 , 157.6 , 129.1 , 1.0$]$.

The spectrum of the closed-loop system matrix $A_e + B_e K_e$ is shown in Fig.~\ref{fig:specplot}, confirming the closed-loop stability.
Finally, an example of closed-loop response under the proposed output-feedback controller, initialized in $x(0) = \col(-1, 1, 2)$, is illustrated in Fig.~\ref{fig:twofigs1}.

\begin{figure}[!htbp]
    \centering
    \begin{subfigure}[b]{0.27\textwidth}
        \centering
        \centering
\begin{tikzpicture}
  \begin{axis}[
      xlabel={\large Time(s)},
      width=0.99\textwidth,
      height=0.8\textwidth,    
      xmin=0, 
      xmax=7.5,
      ymin=0,
      ymax=0.6,
      enlarge x limits=false
  ]
    \addplot[blue, line width=1.2pt] table {Data/M_data_1000000.dat};
    \node at (axis cs:3.5,0.2) {\large $\|M\|_F$};
  \end{axis}
\end{tikzpicture} 
        \label{fig:subfig1}
    \end{subfigure}
\hspace{-1.2cm}
    \begin{subfigure}[b]{0.27\textwidth}
        \centering
        \centering
\begin{tikzpicture}
  \begin{axis}[
      xlabel={\large Time (s)},
      width=0.99\textwidth,
      height=.8\textwidth,  
      xmin=0,
      xmax=10, 
      enlarge x limits=false,
      legend style={at={(0.98,0.98)},anchor=north east},
      legend cell align={left}
  ]
    \addplot[myblue, line width=1.2pt]  table[x index=0, y index=1] {Data/x_data_1000000.dat};
    \addlegendentry{$x_{1}$}
    \addplot[orange, line width=1.2pt]   table[x index=0, y index=2] {Data/x_data_1000000.dat};
    \addlegendentry{$x_{2}$}
    \addplot[mygreen, line width=1.2pt] table[x index=0, y index=3] {Data/x_data_1000000.dat};
    \addlegendentry{$x_{3}$}
  \end{axis}
\end{tikzpicture}
        \label{fig:subfig2}
    \end{subfigure}
    \caption{Simulation results of the closed-loop system}
    \label{fig:twofigs1}
\end{figure}
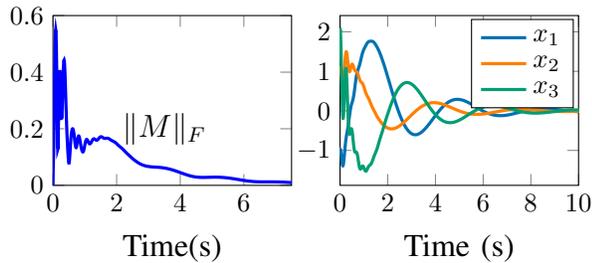

\begin{figure}[htbp]
    \centering
    \scalebox{0.7}{\begin{tikzpicture}
  \begin{axis}[
    name=main,
    width=0.62\textwidth, height=0.46\textwidth,
    axis lines=middle,
    axis equal image,
    xmin=-70, xmax=10, ymin=-50, ymax=50,
    enlargelimits=false, clip=true,
    tick style={black},
    every axis plot/.append style={line join=round},
    xlabel={\large Real axis},
    ylabel={\large Imag axis},
    xlabel style={at={(axis description cs:0.9,0.45)}, anchor=west},
  ]

    \pgfplotsinvokeforeach{10,20,30,40,50,60}{
      \addplot[domain=0:360, samples=361, thin, gray!35]
        ({#1*cos(x)},{#1*sin(x)});
    }

    \pgfplotsinvokeforeach{0,15,...,345}{
      \addplot[thin, gray!35, samples=2, domain=-70:10, variable=\r]
        ({\r*cos(#1)},{\r*sin(#1)});
    }

    \addplot[
      only marks, mark=x, mark size=2 pt, red, line width=1.2pt
    ] table[col sep=space, header=false]{Data/eig_points.dat};

    \draw[black, very thin]
      (axis cs:-5,-5) rectangle (axis cs:1,5);

    \coordinate (box_corner) at (axis cs:1,0);

  \end{axis}

  \begin{axis}[
    name=zoom,
    at={(main.north east)},
    anchor=north west,
    xshift=10mm,
    width=0.26\textwidth,
    height=0.26\textwidth,
    axis lines=middle,
    axis equal image,
    xmin=-5, xmax=1,
    ymin=-5, ymax=5,
    xtick={-5,-2.5,0},
    ytick={-5,0,5},
    tick style={black},
  ]

    \pgfplotsinvokeforeach{1,2,3,4,5}{
      \addplot[domain=0:360, samples=181, thin, gray!35]
        ({#1*cos(x)},{#1*sin(x)});
    }

    \pgfplotsinvokeforeach{0,15,...,345}{
      \addplot[thin, gray!35, samples=2, domain=-5:1, variable=\r]
        ({\r*cos(#1)},{\r*sin(#1)});
    }

    \addplot[
      only marks, mark=x, mark size=2 pt, red, line width=1.2pt
    ] table[col sep=space, header=false]{Data/eig_points.dat};

  \end{axis}

  \draw[->, thick, black]
    (box_corner) -- ([xshift = -0.7cm, yshift = -0.82cm]zoom.center);

\end{tikzpicture}}
    \caption{Spectrum of $A_e + B_eK_e$
    }
    \label{fig:specplot}
\end{figure}

\section{Concluding Remarks}
We proposed an approach for data-driven stabilization of general continuous-time LTI MIMO systems without any restriction on the observability indices.
We showed that Kreisselmeier's adaptive filter acts as an observer of a canonical non-minimal realization of the plant, allowing to design an output-feedback controller comprising the filter dynamics and a linear feedback law depending on the filter states.
Due the filter structure, we compute the gains of the feedback law with a data-based LMI formulation where, exploiting the stabilizability of the non-minimal realization, it is not necessary to have full-rank data.
Future work will aim at reducing the dimension of the controller dynamics by either considering smaller filter structures or investigating model reduction to simplify the controller implementation.
Furthermore, we will investigate how to remove the assumption of prior knowledge of the state dimension $n$.

\bibliographystyle{IEEEtran}
\bibliography{refs.bib}

@article{WILetal,
  title={A note on persistency of excitation},
  author={Willems, Jan C and Rapisarda, Paolo and Markovsky, Ivan and De Moor, Bart LM},
  journal={Syst. Control Lett.},
  volume={54},
  number={4},
  pages={325--329},
  year={2005},
  publisher={Elsevier}
}

@article{KRE79,
  title={The generation of adaptive law structures for globally convergent adaptive observers},
  author={Kreisselmeier, GERHARD},
  journal={IEEE Trans. Autom. Control},
  volume={24},
  number={3},
  pages={510--513},
  year={1979},
  publisher={IEEE}
}

@article{BOSetal24,
  title={Derivative-free data-driven control of continuous-time linear time-invariant systems},
  author={Bosso, Alessandro and Borghesi, Marco and Iannelli, Andrea and Notarstefano, Giuseppe and Teel, Andrew R},
  journal={Eur. J. Control},
  year={2025},
  note={Art. no. 101309},
  publisher={Elsevier}
}

@article{BOSetal25,
  title={Data-Driven Control of Continuous-Time {LTI} Systems via Non-Minimal Realizations},
  author={Bosso, Alessandro and Borghesi, Marco and Iannelli, Andrea and Notarstefano, Giuseppe and Teel, Andrew R},
  journal={ArXiv Preprint:2505.22505},
  year={2025}
}

@article{DEPTES,
  title={Formulas for data-driven control: Stabilization, optimality, and robustness},
  author={De Persis, Claudio and Tesi, Pietro},
  journal={IEEE Trans. Autom. Control},
  volume={65},
  number={3},
  pages={909--924},
  year={2019},
  publisher={IEEE}
}

@article{ANDPRA,
  title={A unifying point of view on output feedback designs for global asymptotic stabilization},
  author={Andrieu, Vincent and Praly, Laurent},
  journal={Automatica},
  volume={45},
  number={8},
  pages={1789--1798},
  year={2009},
  publisher={Elsevier}
}

@article{VANetal,
  title={Data informativity: A new perspective on data-driven analysis and control},
  author={Van Waarde, Henk J and Eising, Jaap and Trentelman, Harry L and Camlibel, M Kanat},
  journal={IEEE Trans. Autom. Control},
  volume={65},
  number={11},
  pages={4753--4768},
  year={2020},
  publisher={IEEE}
}

@article{ALSetal,
  title={Notes on data-driven output-feedback control of linear {MIMO} systems},
  author={Alsalti, Mohammad and Lopez, Victor G and M{\"u}ller, Matthias A},
  journal={IEEE Trans. Autom. Control},
  year={2025},
  volume={70},
  issue={9},
  pages={6143--6150},
  publisher={IEEE}
}

@article{ORTetal19,
  title={Adaptive control of linear multivariable systems using dynamic regressor extension and mixing estimators: Removing the high-frequency gain assumptions},
  author={Ortega, Romeo and Gerasimov, Dmitry N and Barabanov, Nikita E and Nikiforov, Vladimir O},
  journal={Automatica},
  volume={110},
  year={2019},
  note={Art. no. 108589}
}

@inproceedings{Lofberg2004,
    address = {Taipei, Taiwan},
    author = {L{\"{o}}fberg, J.},
    booktitle = {In Proc. the CACSD Conf.},
    title = {{YALMIP} : A Toolbox for Modeling and Optimization in {MATLAB}},
    year = {2004}
}

@manual{mosek,
   author = "MOSEK ApS",
   title = "The MOSEK optimization toolbox for MATLAB manual. Version 9.0.",
   year = 2019,
   url = "http://docs.mosek.com/9.0/toolbox/index.html"
 }

@article{rapisarda2023orthogonal,
  title={Orthogonal polynomial bases for data-driven analysis and control of continuous-time systems},
  author={Rapisarda, P and van Waarde, Henk J and {\c{C}}amlibel, MK},
  journal={IEEE Trans. Autom. Control},
  volume={69},
  number={7},
  pages={4307--4319},
  year={2023},
  publisher={IEEE}
}

@inproceedings{ohta2024sampling,
  title={A sampling linear functional framework for data-driven analysis and control of continuous-time systems},
  author={Ohta, Yoshito and Rapisarda, Paolo},
  booktitle={IEEE Conf. Decis. Control},
  pages={357--362},
  year={2024}
}

@article{WANetal,
  title={Identifiability implies robust, globally exponentially convergent on-line parameter estimation},
  author={Wang, Lei and Ortega, Romeo and Bobtsov, Alexey and Romero, Jose Guadalupe and Yi, Bowen},
  journal={Int. J. Control},
  volume={97},
  number={9},
  pages={1967--1983},
  year={2024},
  publisher={Taylor \& Francis}
}

@article{ANDetal,
  title={Transverse exponential stability and applications},
  author={Andrieu, Vincent and Jayawardhana, Bayu and Praly, Laurent},
  journal={IEEE Trans. Autom. Control},
  volume={61},
  number={11},
  pages={3396--3411},
  year={2016},
  publisher={IEEE}
}

@article{YIetal,
  title={Orbital stabilization of nonlinear systems via Mexican sombrero energy shaping and pumping-and-damping injection},
  author={Yi, Bowen and Ortega, Romeo and Wu, Dongjun and Zhang, Weidong},
  journal={Automatica},
  volume={112},
  year={2020},
  publisher={Elsevier},
  note={Art. no. 108661}
}

@article{ASTORT,
  title={Immersion and invariance: A new tool for stabilization and adaptive control of nonlinear systems},
  author={Astolfi, Alessandro and Ortega, Romeo},
  journal={IEEE Trans. Autom. control},
  volume={48},
  number={4},
  pages={590--606},
  year={2003},
  publisher={IEEE}
}

@article{HLAetal,
  title={Path following for a class of mechanical systems},
  author={Hladio, Andre and Nielsen, Christopher and Wang, David},
  journal={IEEE Trans. Control Syst. Technol.},
  volume={21},
  number={6},
  pages={2380--2390},
  year={2012},
  publisher={IEEE}
}

@book{SASBOD,
  title={Adaptive Control: Stability, Convergence and Robustness},
  author={Sastry, Shankar and Bodson, Marc},
  year={1989},
  publisher={Prentice-Hall}
}

@article{eising2024sampling,
  title={When sampling works in data-driven control: Informativity for stabilization in continuous time},
  author={Eising, Jaap and Cort{\'e}s, Jorge},
  journal={IEEE Trans. Autom. Control},
  year={2025},
  volume={70},
  number={1},
  pages={565-572},
  publisher={IEEE}
}

@ARTICLE{10124991,
  author={Nortmann, Benita and Mylvaganam, Thulasi},
  journal={IEEE Trans. Autom. Control}, 
  title={Direct Data-Driven Control of Linear Time-Varying Systems}, 
  year={2023},
  volume={68},
  number={8},
  pages={4888-4895},
  keywords={Time-varying systems;Noise measurement;Linear matrix inequalities;Trajectory;State feedback;Data models;Closed loop systems;Data-driven control;linear time-varying (LTV) systems;linear matrix inequalities (LMIs);optimal control;robust control},
  doi={10.1109/TAC.2023.3276909}}

@ARTICLE{10122597,
  author={van Waarde, Henk J. and Eising, Jaap and Camlibel, M. Kanat and Trentelman, Harry L.},
  journal={IEEE Trans. Autom. Control}, 
  title={A Behavioral Approach to Data-Driven Control With Noisy Input–Output Data}, 
  year={2024},
  volume={69},
  number={2},
  pages={813-827},
  keywords={Linear matrix inequalities;Noise measurement;Behavioral sciences;Mathematical models;Adaptive control;Lyapunov methods;Asymptotic stability;Behavioral approach;data-driven control;quadratic matrix inequalities (QMIs);robust control;  $S$  -procedure},
  doi={10.1109/TAC.2023.3275014}}

@article{van2023quadratic,
  title={Quadratic matrix inequalities with applications to data-based control},
  author={Van Waarde, Henk J and Camlibel, M Kanat and Eising, Jaap and Trentelman, Harry L},
  journal={SIAM J. Control Optim.},
  volume={61},
  number={4},
  pages={2251--2281},
  year={2023},
  publisher={SIAM}
}

@article{hu2025enforcing,
  title={Enforcing contraction via data},
  author={Hu, Zhongjie and De Persis, Claudio and Tesi, Pietro},
  journal={IEEE Trans. Autom. Control},
  year={2025},
  publisher={IEEE}
}

@article{LI2026112545,
title = {Controller synthesis from noisy-input noisy-output data},
journal = {Automatica},
volume = {183},
pages = {112545},
year = {2026},
issn = {0005-1098},
author = {Lidong Li and Andrea Bisoffi and Claudio {De Persis} and Nima Monshizadeh}
}

@article{LOPetal,
  title={An input-output continuous-time version of {W}illems’ lemma},
  author={Lopez, Victor G and M{\"u}ller, Matthias A and Rapisarda, Paolo},
  journal={IEEE Control Syst. Lett.},
  volume={8},
  pages={916--921},
  year={2024},
  publisher={IEEE}
}

@article{SCHetal,
  title={A continuous-time fundamental lemma and its application in data-driven optimal control},
  author={Schmitz, Philipp and Faulwasser, Timm and Rapisarda, Paolo and Worthmann, Karl},
  journal={Syst. Control Lett.},
  volume={194},
  year={2024},
  note={Art. no. 105950}
}

@book{DataBasedCtrlBook,
title = "Data-Based Linear Systems and Control Theory",
abstract = "This book provides a comprehensive framework for data-driven modeling, system analysis and control design. The focus is on discrete-time linear systems and (noisy) time series data. Based on the concept of data informativity, a mathematical theory is developed that showcases conditions on the data required to solve a large variety of systems and control problems. In addition, concrete design methods are provided that highlight how controllers and models can be readily obtained from informative data.",
author = "van Waarde, Henk J. and Camlibel, \{M. Kanat\} and Trentelman, \{Harry L.\}",
year = "2025",
language = "English",
isbn = "9798289885807",
publisher = "Amazon",
edition = "1st",
}

@misc{shakouri2025data,
      title={Data-Driven Stabilization Using Prior Knowledge on Stabilizability and Controllability}, 
      author={Amir Shakouri and Henk J. van Waarde and Tren M. J. T. Baltussen and W. P. M. H. Heemels},
      year={2025},
      eprint={2510.25452},
      archivePrefix={arXiv},
      primaryClass={math.OC},
}
\end{document}